\pdfoutput=1
\documentclass[letterpaper,10pt,conference]{ieeeconf}

\IEEEoverridecommandlockouts
\usepackage[T1]{fontenc}

\usepackage{amsmath,amssymb,amsthm}
\usepackage[noend,linesnumbered]{algorithm2e}
\usepackage{graphicx}
\usepackage{cite}
\usepackage{url}
\usepackage{enumerate}
\usepackage{xcolor}
\usepackage{booktabs}
\usepackage{tikz}
\usetikzlibrary{positioning}
\usepackage{todonotes}

\newtheorem{proposition}{Proposition}
\newtheorem{lemma}{Lemma}
\newtheorem{corollary}{Corollary}
\newtheorem{remark}{Remark}

\newtheorem{problem}{Problem}

\newcommand{\Expect}{\mathbb{E}}
\newcommand{\Prob}{\mathbb{P}}
\newcommand{\Real}{\mathbb{R}}
\newcommand{\cS}{\mathcal{S}}
\newcommand{\cA}{\mathcal{A}}
\newcommand{\cT}{\mathcal{T}}
\newcommand{\cU}{\mathcal{U}}
\newcommand{\cF}{\mathcal{F}}
\newcommand{\cR}{\mathcal{R}}

\newcommand{\dtheta}{\Delta\theta}

\begin{document}

\title{\LARGE \bf
Policy Synthesis for Finite Populations of MDP Agents under Aggregate Reach-Avoid Chance Constraints}

\author{Jie Fu$^{1}$ and Anamika Dubey$^{2}$%
\thanks{$^{1}$Jie Fu is with the Department of Electrical and Computer
Engineering, University of Florida, Gainesville, FL 32611, USA
{\tt\small fujie@ufl.edu}.}%
\thanks{$^{2}$Anamika Dubey is with the School of Electrical Engineering
and Computer Science, Washington State University, Pullman, WA 99164, USA
{\tt\small anamika.dubey@wsu.edu}.}%
}

\maketitle
\thispagestyle{empty}
\pagestyle{empty}

\begin{abstract}
Consider a finite population of agents with decoupled Markov transition dynamics and empirical-density feedback, subject to the following constraints: with probability at least $1-\delta_r$, at least a fraction $\alpha_r$ of agents must reach a target region at some time $t^*$, while, at each time up to $t^*$, the unsafe population fraction must remain below $\beta_u$ with probability at least $1-\delta_u$. However,
standard mean-field methods enforce these constraints only in
expectation, which fails to account for stochastic fluctuations at
finite fleet size $N$. To address this control problem, we propagate the second-order moment (variance) of the empirical
density alongside the mean-field trajectory via a discrete-time
Lyapunov recursion, and apply the Cantelli inequality to convert
chance constraints into tractable deterministic conditions on the moments of the empirical density.
We then incorporate these moment-based surrogate constraints into a gradient-based sequential convex approximation procedure for density-feedback policy synthesis. We further introduce additional   moment-error bounds  to construct a rigorous finite-$N$ certificate.
The method is evaluated on a gridworld environment and a power-system EV-charging aggregation problem and compared with a standard deterministic population-level LP baseline.
\end{abstract}

\begin{keywords}
 chance constraints, mean-field, Markov decision process,  
covariance propagation.
\end{keywords}

\section{Introduction}
\label{sec:intro}

Controlling large populations of autonomous agents subject to safety
and mission requirements is a fundamental challenge in swarm
robotics~\cite{yang2024rover,doerrControlLargeSwarms2018,demir2015decentralized},
multi-agent systems~\cite{elchamie2016convex}, and
smart-grid applications~\cite{couilletMeanFieldGameElectric2012}.
Tracking the state of every agent individually becomes intractable as
$N$ grows because the resulting joint state space scales exponentially
with $N$; mean-field methods instead describe the population through
an aggregate distribution whose dimension is independent of
$N$~\cite{acikmeseMarkovChainApproach2012a,lasryMeanFieldGames2007,huangLargePopulationStochastic2006}.

For finite $N$, however, the empirical population density exhibits
$O(1/\sqrt{N})$ stochastic fluctuations around its mean-field limit,
so a policy whose limiting trajectory satisfies a constraint may still
violate that constraint with non-negligible probability. We therefore
study policy synthesis for a finite-$N$ population subject to aggregate
reach--avoid chance constraints: with probability at least $1-\delta_r$,
at least a fraction $\alpha_r$ of the agents must occupy the target
region at some time $t^*$, while, at every time up to $t^*$, the
unsafe-region occupancy must not exceed $\beta_u$ with probability at
least $1-\delta_u$.

 \noindent \textbf{Related work.}
 Mean-field games study equilibria among noncooperative agents, whereas
 mean-field control (MFC) considers a cooperative population whose
 aggregate objective is optimized by a centralized planner. Foundational
 continuous-time mean-field game theory characterizes population
 equilibria through coupled backward optimality and forward
 population-dynamics equations
 \cite{lasryMeanFieldGames2007,huangLargePopulationStochastic2006}. For
 discrete-time mean-field MDPs, Bäuerle establishes the existence of
 optimal policies and convergence of finite-population control problems
 to their mean-field limits~\cite{bauerleMeanFieldMarkov2023a}. 
 Related mean-field reinforcement learning (MFRL) studies develop and
 analyze learning algorithms for both mean-field games and cooperative
 mean-field control
\cite{yangMeanFieldMultiAgent2018,fouqueLauriereZhangConvergence2025,angiuliUnifiedReinforcement2022,carmonaLauriereTanLinearQuadratic2019}.  

For the class of reach-avoid  chance constraints, 
stochastic reach--avoid methods has been developed to ensure probabilistic
specifications for individual stochastic systems~\cite{zikelic2023}. Yet, the method does not extend directly to enforce  constraints on the empirical density of a finite
multi-agent population when individual agents are stochastic systems. 
For MFC, 
 safety constraints have been   imposed on the  deterministic population distribution or the system's mean-field limit ~\cite{acikmeseMarkovChainApproach2012a,elchamie2016convex,yang2024rover,mfcbf2024}. Safe MFRL method~\cite{jusupSafeM3UCRL2023}
accounts for uncertainty in unknown transition dynamics when learning
policies subject to global population constraints, but it does not
account for stochastic fluctuations of the realized empirical density
at finite $N$.

Relating the proposed problem to other work focusing on stochastic variability in control design.
Risk-sensitive mean-field games~\cite{saldiApproximateMarkovNash2020}
use an exponential objective to account for variability in an
individual agent's accumulated return, rather than fluctuations in the
aggregate occupancy of a finite population. Covariance steering uses
the mean and covariance of a stochastic system's state to enforce
trajectory-level chance constraints~\cite{okamotoOptimalCovarianceControl2018},
while distribution- and density-steering methods control  state
distributions under non-Gaussian disturbances or in robotic
swarms~\cite{sivaramakrishnanDistributionSteeringDiscretetime2022,sinigagliaRobustOptimalDensity2022}.
These approaches characterize either noise intrinsic to an
individual system's dynamics, or spatial dispersion of an
continuum-population density. In contrast, the covariance we aim to constrain is the
$O(1/N)$ sampling covariance of the discrete empirical occupancy
vector, which arises only because the population is finite and
vanishes as $N\to\infty$.

Related finite-population work~\cite{bayraktarFiniteApproximations2023}
formulates mean-field-type multi-agent control as a measure-valued MDP
and develops discretization-based approximations with near-optimality
guarantees. This measure-valued state representation avoids the
exponentially large joint state space, but the planning method
discretizes the measure-valued state space into a finite set of
quantized empirical distributions, whose number grows combinatorially
with the population size and the number of quantization bins.
Moreover, the resulting approximate policies carry no chance-constraint
guarantee for the finite-$N$ empirical density.
For uncontrolled stochastic population models, \cite{gastRefinedMeanField2017}
derives finite-size corrections to the mean-field limit at large but
finite $N$. This is done by solving a Lyapunov equation which yields a bias-correction term
of order $1/N$ that sharpens the mean-field estimate of a performance
metric's expected value. This work  does not characterize the
variance of the finite-$N$ population distribution, nor does it
address chance-constrained policy synthesis.

\textbf{Contribution.}
We study a finite population of homogeneous MDP agents with
population-independent transition kernels and density-feedback
policies. The goal is to synthesize a policy that, when each agent follows this policy, their collective behavior optimizes an aggregate objective subject to reach--avoid
chance constraints on the empirical density.
We derive first-order recursions for the mean and covariance of the
finite-population empirical density under density-feedback policies and
use Cantelli's inequality to obtain deterministic sufficient conditions
for the reach--avoid chance constraints. 
Because the mean and covariance recursions are first-order
approximations, we introduce bounds on the
moment-approximation errors to  the
reach and avoid margins to derive rigorous
finite-$N$ certificates.
Using these constraint surrogates,
we develop a model-based Sequential Convex Approximation (SCA)
algorithm that updates the density-feedback policy parameters through
a sequence of second-order cone programs, using automatic
differentiation to linearize the reach--avoid constraints locally.  The method is validated using two experiments. 
\section{Problem Formulation}
\label{sec:problem}

\subsection{Swarm MDP and Empirical Density}

A finite-horizon MDP is $M=(\cS,\cA,P,r,\mu_0)$, where $\cS$ and
$\cA$ are finite state and action spaces, $P(s'\mid s,a)$ is the
transition kernel, $r:\cS\times\cA\to\Real$ is the reward, and
$\mu_0\in\Delta(\cS)$ is the mean initial population density.
A swarm of $N$ homogeneous agents evolves on $M$.
The \emph{empirical density} is
\begin{equation}
    \mu_t^N(s) = \frac{1}{N}\sum_{i=1}^N \mathbf{1}[X_t^i=s],
    \quad s\in\cS,
    \label{eq:empirical}
\end{equation}
where $X_t^i\in\cS$ is agent $i$'s state at time $t$.
We allow $\mu_0^N$ to be random with known mean
$\mu_0=\Expect[\mu_0^N]$ and covariance $\Sigma_0$; deterministic
initialization is recovered by setting $\Sigma_0=0$.
Each agent selects actions via the time-varying density-feedback
policy $\pi_{\theta_t}(a\mid s,\mu_t^N)$, parameterized by
$\theta=(\theta_0,\ldots,\theta_{T-1})
\in\Real^{T\times|\cS|\times|\cA|\times n_{\rm feat}}$, where
$\theta_t\in\Real^{|\cS|\times|\cA|\times n_{\rm feat}}$ is the slice
of parameters used at time $t$ and $n_{\rm feat}$ is the number of
features. Specifically, we use the softmax-linear policy
\begin{equation}
\pi_{\theta_t}(a\mid s,\mu)
=\frac{\exp\!\left(\theta_t[s,a,:]^\top\phi(s,\mu)\right)}
{\sum_{a'\in\cA}\exp\!\left(\theta_t[s,a',:]^\top
\phi(s,\mu)\right)},
\label{eq:policy}
\end{equation}
where $\phi(s,\mu)\in\Real^{n_{\rm feat}}$ is a twice continuously
differentiable feature vector that may include state and
population-density information.
We write $\pi_\theta$ for brevity when the time index is clear from
context.
Because the transition kernel and reward are population-independent,
the density-feedback policy couples the agents through the controlled
transition dynamics.

\begin{remark}[Density-Dependent Reward: General Case]
	\label{rem:density_reward}
	
	The reward $r(s,a)$ in Problem~\ref{prob:main} depends only on an
	agent's state-action pair and not on the population density. More
	general density-dependent rewards arise in congestion models and
	coverage control but are outside the present scope.
\end{remark}

\subsection{Reach-Avoid Chance Constraints}

Let $\cT,\cU\subseteq\cS$ be disjoint target and unsafe regions, and
define the free region as $\cF=\cS\setminus(\cT\cup\cU)$.
Define the aggregate density in region $\cR$ as
$\mu_t^N(\cR)=\sum_{s\in\cR}\mu_t^N(s)$.
Let $[T]:=\{1,\ldots,T\}$ denote the set of time steps.

\begin{problem}[Chance-Constrained Reach-Avoid]
\label{prob:main}
Given $\alpha_r,\beta_u,\delta_r,\delta_u\in(0,1)$, find policy
parameters $\theta^*$ and a reach time $t^*\in[T]$ solving
\begin{subequations}
\label{eq:problem}
\begin{align}
    \max_{\theta,\,t^*\in[T]}\quad &
    J(\theta) = \Expect^{\pi_\theta}\!\Bigl[
        \textstyle\sum_{t=0}^{T-1} R_t
    \Bigr]
    \\
    \text{s.t.}\quad &
    \Prob^{\pi_\theta}\!\bigl(\mu_{t^*}^N(\cT)\ge\alpha_r\bigr)
    \ge 1-\delta_r,
    \label{eq:reach}
    \\
    &\Prob^{\pi_\theta}\!\bigl(\mu_t^N(\cU)\le\beta_u\bigr)
    \ge 1-\delta_u,
    \quad \forall t\le t^*,
    \label{eq:avoid}
\end{align}
\end{subequations}
where $R_t=\sum_{s,a}r(s,a)\mu_t^N(s)
\pi_{\theta_t}(a\mid s,\mu_t^N)$ is the population-average one-step
reward conditional on $\mu_t^N$.
\end{problem}
The constraints in~\eqref{eq:reach}--\eqref{eq:avoid} are   bound the reach and avoid violation
probabilities separately, with the avoid guarantee imposed at each
time $t\le t^*$.

\section{Mean-Field Dynamics and Covariance Propagation}
\label{sec:dynamics}

\subsection{Mean-Field Trajectory}

For a fixed finite horizon and under the regularity and
initial-concentration assumptions stated below, $\mu_t^N$ concentrates
around the deterministic \emph{mean-field trajectory} $\bar\mu_t$ as
$N\to\infty$. This trajectory is defined by the \emph{mean-field flow
equation}~\cite{acikmeseMarkovChainApproach2012a}:
\begin{align}
    \bar\mu_{t+1}(s';\theta)
    &= \sum_{s,a} P(s'\mid s,a)\,
      \pi_{\theta_t}(a\mid s,\bar\mu_t)\,\bar\mu_t(s),
      \label{eq:mf_flow}\\
    \bar\mu_0&=\mu_0.
\end{align}
We distinguish $\bar\mu_t$ from the true finite-$N$ expectation
$m_t^N:=\Expect[\mu_t^N]$. Define the conditional-mean map
\begin{equation}
    g_{\theta_t}(\mu)(s'):=\sum_{s,a}P(s'\mid s,a)\,\pi_{\theta_t}(a\mid s,\mu)\mu(s).
    \label{eq:conditional_mean_map}
\end{equation}
As shown below, the finite-population process satisfies
$\Expect[\mu_{t+1}^N\mid\mu_t^N]=g_{\theta_t}(\mu_t^N)$ exactly,
whereas the deterministic trajectory satisfies
$\bar\mu_{t+1}=g_{\theta_t}(\bar\mu_t)$. By Jensen's inequality,
\begin{equation}
 m_{t+1}^N=\Expect[g_{\theta_t}(\mu_t^N)]
 \neq g_{\theta_t}(m_t^N)
 \label{eq:mean_jensen_gap}
\end{equation}
in general 
when density feedback makes $g_{\theta_t}$ nonlinear. So setting $\bar\mu_t=m_t^N$ on the right-hand side does not
reproduce $m_{t+1}^N$ on the left. 
Accordingly, $\bar\mu_t$ is a nominal deterministic population trajectory
and a leading-order approximation to the finite-$N$ mean, not the exact
expectation.  

Define the time-$t$ \emph{occupation measure} by
\begin{equation}
z_t(s,a;\theta):=\pi_{\theta_t}(a\mid s,\bar\mu_t(\theta))
\bar\mu_t(s;\theta).
\label{eq:occupation_measure}
\end{equation}
For compactness, define
$f_t(\theta,\mu)(s,a):=\pi_{\theta_t}(a\mid s,\mu)\mu(s)$ and
$\mathbf z_t(\theta)=f_t(\theta,\bar\mu_t(\theta))$. The \emph{flow
matrix} $F\in\Real^{|\cS|\times|\cS||\cA|}$ is defined by
$F[s',sa]=P(s'\mid s,a)$.
The mean-field flow can then be written compactly as
\begin{equation}
    \bar\mu_{t+1}(\theta) = F\,\mathbf{z}_t(\theta) = Ff_t(\theta,\bar\mu_t(\theta)),
    \quad \bar\mu_0 = \mu_0.
    \label{eq:mf_recurrence}
\end{equation}

\subsection{Exact Conditional Moments and First-Order Covariance Propagation}

The empirical density itself is a finite-state Markov process under a
density-feedback policy. For an arbitrary density $\mu$, define the
effective one-agent transition matrix
\begin{equation}
 Q_{\theta_t}(\mu)[s',s]
 :=\sum_{a\in\cA}\pi_{\theta_t}(a\mid s,\mu)P(s'\mid s,a).
 \label{eq:Qmu}
\end{equation}
The conditional-mean map in~\eqref{eq:conditional_mean_map} then
satisfies $g_{\theta_t}(\mu)=Q_{\theta_t}(\mu)\mu=Ff_t(\theta,\mu)$.
Conditional on $\mu_t^N=\mu$ for an attainable empirical density
$\mu\in\Delta_N(\cS):=\{\nu\in\Delta(\cS):N\nu(s)\in\mathbb Z_{\ge0}
\ \forall s\}$, there are $N\mu(s)$ agents in state $s$.
Given this count vector, agents randomize their actions independently and
their physical transitions are conditionally independent. Consequently,
\begin{equation}
 \Expect[\mu_{t+1}^N\mid\mu_t^N=\mu]
 =g_{\theta_t}(\mu)
 \label{eq:exact_cond_mean}
\end{equation}
exactly. Define
\begin{align} q_s(\mu)&:=Q_{\theta_t}(\mu)[\cdot,s],\\
 \Lambda_{\theta_t}(\mu)
 &:=
 \sum_{s\in\cS}\mu(s)
 \left[\operatorname{Diag}(q_s(\mu))
 -q_s(\mu)q_s(\mu)^\top\right] 
 \label{eq:Lambda}. 
\end{align}
Summing the covariances of the conditionally independent categorical
next-state indicators gives the exact identity
\begin{equation}
 \operatorname{Cov}(\mu_{t+1}^N\mid\mu_t^N=\mu)
 =\frac{1}{N}\Lambda_{\theta_t}(\mu).
 \label{eq:exact_cond_cov}
\end{equation}

For density feedback, fluctuations of $\mu_t^N$ propagate through the
Jacobian of $g_{\theta_t}$. Let
\begin{equation}
 H_t(\theta):=
 \left.\frac{\partial g_{\theta_t}(\mu)}{\partial\mu}
 \right|_{\bar\mu_t}
 =FJ_t^\mu(\theta),
 \label{eq:Amu_def}
\end{equation}
where, coordinatewise,
\begin{multline}
 J_t^\mu(s,a)[s'']
 =\pi_{\theta_t}(a\mid s,\bar\mu_t)\delta_{ss''}
 {}+\bar\mu_t(s)\\
 \left.\partial_{\mu(s'')}\pi_{\theta_t}(a\mid s,\mu)
 \right|_{\bar\mu_t}.
 \label{eq:Jmu_coord_early}
\end{multline}
When the policy is density independent, the second term vanishes and
$H_t (\theta)=Q_{\theta_t}(\bar\mu_t)$.

We say that the initial empirical density satisfies the \emph{initial
concentration condition} if
\begin{equation}
\mu_0^N-\mu_0=O_p(N^{-1/2}).
\label{eq:initial_concentration}
\end{equation}
This condition holds for deterministic initialization and for
independently sampled initial states; in the latter case,
\begin{equation}
\Expect[\|\mu_0^N-\mu_0\|^2]
=\frac{1-\|\mu_0\|_2^2}{N}.
\label{eq:initial_mean_square}
\end{equation}

\begin{proposition}[First-Order Covariance Approximation]
\label{prop:lyapunov}
Assume a fixed horizon, the initial concentration condition, and that
$g_{\theta_t}$ is twice continuously differentiable in a neighborhood
of the nominal trajectory. Linearizing the conditional-mean map around
the nominal trajectory gives the following first-order approximation
$\widehat\Sigma_t$ to
$\Sigma_t:=\operatorname{Cov}(\mu_t^N)$:
\begin{equation}
 \widehat\Sigma_{t+1}
 =H_t\widehat\Sigma_tH_t^\top
 +\frac{1}{N}\Lambda_{\theta_t}(\bar\mu_t).
 \label{eq:lyapunov}
\end{equation}
The recursion is initialized with $\widehat\Sigma_0=\Sigma_0$,
including $\widehat\Sigma_0=0$ for deterministic initialization.
\end{proposition}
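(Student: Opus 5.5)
The plan is to use the law of total covariance together with the exact conditional moments \eqref{eq:exact_cond_mean} and \eqref{eq:exact_cond_cov}, and then linearize around the nominal trajectory $\bar\mu_t$. Conditioning on $\mu_t^N$, the law of total covariance gives the exact identity
\begin{equation*}
\Sigma_{t+1}=\operatorname{Cov}\bigl(g_{\theta_t}(\mu_t^N)\bigr)+\frac{1}{N}\Expect\bigl[\Lambda_{\theta_t}(\mu_t^N)\bigr].
\end{equation*}
This reduces the proposition to approximating each of the two terms to leading order.

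\textbf{Step 1: first moment.} I will first show by induction on $t$ that $e_t:=\mu_t^N-\bar\mu_t=O_p(N^{-1/2})$, with second moments $\Expect\|e_t\|^2=O(1/N)$. The base case is the initial concentration condition \eqref{eq:initial_concentration}. For the inductive step, write
\begin{equation*}
e_{t+1}=\bigl(g_{\theta_t}(\mu_t^N)-g_{\theta_t}(\bar\mu_t)\bigr)+\xi_{t+1},
\end{equation*}
where $\xi_{t+1}:=\mu_{t+1}^N-g_{\theta_t}(\mu_t^N)$ is a martingale-difference term. Its conditional covariance $\Lambda/N$ is $O(1/N)$ because $\Lambda$ is built from probability vectors and is therefore uniformly bounded. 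The first term is Lipschitz in $e_t$: $g_{\theta_t}$ is smooth, and the simplex is compact, so a global Lipschitz constant is available.

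\textbf{Step 2: Taylor expansion.} Expand $g_{\theta_t}(\mu_t^N)=g_{\theta_t}(\bar\mu_t)+H_te_t+r_t$. By the $C^2$ assumption, $\|r_t\|\le C\|e_t\|^2$ on the neighborhood; outside it I will use boundedness on the simplex together with a tail bound. This gives
\begin{equation*}
\operatorname{Cov}\bigl(g_{\theta_t}(\mu_t^N)\bigr)=H_t\Sigma_tH_t^\top+\text{(cross terms with } r_t\text{)}.
\end{equation*}
The cross terms are $O(\Expect\|e_t\|^3)$ and $O(\Expect\|e_t\|^4)$ by Cauchy--Schwarz. Similarly, $\Lambda_{\theta_t}$ is $C^1$ in $\mu$ because $q_s(\mu)$ is smooth, so
\begin{equation*}
\frac{1}{N}\Expect\bigl[\Lambda_{\theta_t}(\mu_t^N)\bigr]=\frac{1}{N}\Lambda_{\theta_t}(\bar\mu_t)+O\bigl(N^{-1}\Expect\|e_t\|\bigr)=\frac{1}{N}\Lambda_{\theta_t}(\bar\mu_t)+O(N^{-3/2}).
\end{equation*}
Dropping all higher-order terms yields exactly the recursion \eqref{eq:lyapunov}.

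\textbf{Step 3: error propagation.} Define $D_t:=\Sigma_t-\widehat\Sigma_t$. Then $D_{t+1}=H_tD_tH_t^\top+O(N^{-3/2})$ with $D_0=0$. Over the fixed horizon $T$ the operators $H_t$ are bounded, so $D_t=O(N^{-3/2})$, while $\Sigma_t$ and $\widehat\Sigma_t$ are both $\Theta(1/N)$. This is the precise sense in which $\widehat\Sigma_t$ is first-order accurate. The density-independent case needs no separate treatment: there the map is linear, $r_t=0$, and only the $\Lambda$ term contributes an error.

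\textbf{Main obstacle.} The hard part is bounding the higher moments $\Expect\|e_t\|^3$ and $\Expect\|e_t\|^4$ at orders $N^{-3/2}$ and $N^{-2}$, not merely showing $e_t=O_p(N^{-1/2})$. I would handle this with a conditional Rosenthal/Hoeffding bound on $\xi_{t+1}$, which is an average of $N$ conditionally independent bounded vectors, so that $\Expect\|\xi_{t+1}\|^{p}=O(N^{-p/2})$. I would then propagate this through the Lipschitz recursion to get $\Expect\|e_t\|^p=O(N^{-p/2})$. This needs the initial condition to hold in $L^p$, which is true for deterministic or i.i.d.\ initialization. If only $O_p$ is assumed, I would state the result in probability or in terms of the approximation order rather than as a moment bound.
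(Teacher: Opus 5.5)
Your proposal follows the same route as the paper's proof sketch: the law of total covariance with the exact conditional moments, linearization of $g_{\theta_t}$ at $\bar\mu_t$, and evaluation of $\Lambda_{\theta_t}$ at $\bar\mu_t$; it improves on the sketch by bounding the discarded terms to obtain $\Sigma_t-\widehat\Sigma_t=O(N^{-3/2})$, and you are right that this moment statement requires the initial condition in $L^p$ rather than merely $O_p(N^{-1/2})$. Two side remarks are slightly off but not load-bearing: in the density-independent case $\Lambda_{\theta_t}$ is linear in $\mu$ and $m_t^N=\bar\mu_t$, so the recursion is exact with no $\Lambda$-error; and $\Sigma_t$ is only $O(1/N)$, not $\Theta(1/N)$, since it can vanish (e.g.\ once all mass sits in absorbing states).
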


\begin{proof}[Proof sketch]
The law of total covariance and
\eqref{eq:exact_cond_mean}--\eqref{eq:exact_cond_cov} give
\begin{align}
 \Sigma_{t+1}
 &=\frac1N\Expect[\Lambda_{\theta_t}(\mu_t^N)]
 +\operatorname{Cov}(g_{\theta_t}(\mu_t^N)).
 \label{eq:total_cov_exact}
\end{align}
Linearizing the conditional-mean map around $\bar\mu_t$ yields
\[
 g_{\theta_t}(\mu_t^N)
 =g_{\theta_t}(\bar\mu_t)
 +H_t(\mu_t^N-\bar\mu_t)+\text{higher-order terms}.
\]
Under this linearization, $H_t\widehat\Sigma_tH_t^\top$ describes how
existing population fluctuations propagate to the next time step.
Because $\mu_t^N$ fluctuates around $\bar\mu_t$ at rate
$O_p(N^{-1/2})$, the variation of $\Lambda_{\theta_t}(\mu_t^N)$
contributes only a smaller-order term after multiplication by $1/N$;
we therefore evaluate $\Lambda_{\theta_t}$ at $\bar\mu_t$. This gives
\eqref{eq:lyapunov}. %
\end{proof}
\begin{remark}[Aggregate action constraints]
For constraints on actions aggregated across the population, an analogous
first-order variance approximation can be derived for the empirical
occupation measure; see Appendix~\ref{app:occupation_variance}.
\end{remark}

\section{Cantelli Chance-Constraint Certification}
\label{sec:cantelli}

The standard MF approach enforces only the nominal constraints
$\bar\mu_t(\cT)\ge\alpha_r$ and $\bar\mu_t(\cU)\le\beta_u$,
but these mean-field conditions alone do not imply the finite-$N$
chance constraints~\eqref{eq:reach}--\eqref{eq:avoid}.
Write $m_t^N:=\Expect[\mu_t^N]$ for the exact finite-population mean and
$m_t^N(\mathcal R):=c_{\mathcal R}^{\top}m_t^N$ for the expected fraction
of agents in $\mathcal R$.
Here, $c_{\mathcal R}(s)=\mathbf{1}[s\in\mathcal R]$ is the indicator
vector for $\mathcal R$.

We use Cantelli's inequality to derive deterministic sufficient
conditions based on the first two moments, following a construction
similar to that used in chance-constrained
MPC~\cite{farinaScattoliniChanceMPC2016}.
\begin{proposition}[Cantelli Bounds]
\label{prop:cantelli}
For any $t$ and policy $\pi$, the reach bound below holds when
$m_t^N(\cT)\ge\alpha_r$, and the avoid bound holds when
$m_t^N(\cU)\le\beta_u$:
\begin{align}
    \Prob^\pi\!\bigl(\mu_t^N(\cT)\ge\alpha_r\bigr)
    &\ge 1-\frac{\sigma_t^2(\cT)}
                {\sigma_t^2(\cT)+(m_t^N(\cT)-\alpha_r)^2},
    \label{eq:cantelli_reach}
    \\
    \Prob^\pi\!\bigl(\mu_t^N(\cU)\le\beta_u\bigr)
    &\ge 1-\frac{\sigma_t^2(\cU)}
                {\sigma_t^2(\cU)+(\beta_u-m_t^N(\cU))^2},
    \label{eq:cantelli_avoid}
\end{align}
where $\sigma_t^2(\cR)=c_\cR^\top\Sigma_t c_\cR$ denotes the exact
finite-population variance.
Specifically, when the relevant variance is zero, the corresponding
probability bound equals one, and the Cantelli condition reduces to
$m_t^N(\cT)\ge\alpha_r$ for reach or
$m_t^N(\cU)\le\beta_u$ for avoid.
\end{proposition}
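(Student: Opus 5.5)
The plan is to apply the one-sided Chebyshev (Cantelli) inequality to the scalar random variable $Y=c_\cR^\top\mu_t^N$. First note that $Y$ is bounded in $[0,1]$, so its mean and variance exist. By linearity, $\Expect[Y]=c_\cR^\top m_t^N=m_t^N(\cR)$ and $\operatorname{Var}(Y)=c_\cR^\top\Sigma_t c_\cR=\sigma_t^2(\cR)$. Here $\Sigma_t$ is the exact covariance, not the approximation $\widehat\Sigma_t$ from Proposition~\ref{prop:lyapunov}, so no approximation enters this step. Cantelli's inequality states that for any $\lambda>0$,
\[
\Prob\bigl(Y-\Expect[Y]\le-\lambda\bigr)\le\frac{\operatorname{Var}(Y)}{\operatorname{Var}(Y)+\lambda^2}.
\]
If needed, I would prove it from Markov's inequality applied to $(\Expect[Y]-Y+u)^2$ with the optimal shift $u=\operatorname{Var}(Y)/\lambda$. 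The mirror statement for the upper tail follows by applying the same bound to $-Y$.

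For the reach bound, take $\cR=\cT$ and $\lambda=m_t^N(\cT)-\alpha_r$. The failure event is $\{\mu_t^N(\cT)<\alpha_r\}=\{Y-\Expect[Y]<-\lambda\}$, which is contained in $\{Y-\Expect[Y]\le-\lambda\}$. Cantelli then bounds its probability by $\sigma_t^2(\cT)/(\sigma_t^2(\cT)+\lambda^2)$, and taking the complement gives \eqref{eq:cantelli_reach}. The avoid bound is symmetric. Take $\cR=\cU$ and $\lambda=\beta_u-m_t^N(\cU)$, apply the upper-tail version to the failure event $\{\mu_t^N(\cU)>\beta_u\}$, and take the complement to get \eqref{eq:cantelli_avoid}. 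These steps use only the exact moments $m_t^N$ and $\Sigma_t$, so the bounds hold for any policy $\pi$, including density-feedback policies.

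The main obstacle is the boundary case $\lambda=0$, which is allowed by the hypotheses $m_t^N(\cT)\ge\alpha_r$ and $m_t^N(\cU)\le\beta_u$, together with the zero-variance case; Cantelli requires $\lambda>0$, and the ratio may be $0/0$. I would handle these separately. If $\lambda=0$ and $\sigma_t^2>0$, the right-hand side equals $1-1=0$ and the bound is trivial. If $\sigma_t^2(\cR)=0$, then $Y=\Expect[Y]$ almost surely. Under the stated mean condition this gives $\mu_t^N(\cT)\ge\alpha_r$, respectively $\mu_t^N(\cU)\le\beta_u$, almost surely, so the probability equals one. This matches the "Specifically" clause if the ratio is read as $0$ when $\sigma_t^2=0$, including the degenerate case where $\lambda=0$ as well. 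In the case $\lambda>0$ with $\sigma_t^2=0$ the formula gives one directly. With these conventions the proposition holds, and the reduction of the Cantelli condition to the mean condition follows.
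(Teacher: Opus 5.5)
Your proposal is correct and follows the paper's proof essentially step for step: Cantelli's inequality applied to the exact moments of $c_\cR^\top\mu_t^N$, the containment $\{\mu_t^N(\cT)<\alpha_r\}\subseteq\{\mu_t^N(\cT)-m_t^N(\cT)\le -d_r\}$ followed by complementation, and a separate zero-variance case in which the variable is constant almost surely. Your explicit handling of the $\lambda=0$ boundary and of the $0/0$ convention adds care to details the paper leaves implicit.
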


\begin{proof}
For the reach bound, let $X=\mu_t^N(\cT)$ and
$d_r=m_t^N(\cT)-\alpha_r\ge0$. If
$\sigma_t^2(\cT)>0$, Cantelli's inequality gives
\[
\Prob\bigl(X-m_t^N(\cT)\le-d_r\bigr)
\le
\frac{\sigma_t^2(\cT)}{\sigma_t^2(\cT)+d_r^2}.
\]
Since $\{X<\alpha_r\}\subseteq
\{X-m_t^N(\cT)\le-d_r\}$, taking complements yields
\eqref{eq:cantelli_reach}. If the variance is zero, $X$ is constant
almost surely and the conclusion follows directly. The bound on the avoid condition is derived analogously.
\end{proof}

\begin{lemma}[Sufficient Cantelli Conditions]
\label{lem:cantelli_sufficient}
Setting $\kappa_r=\sqrt{(1-\delta_r)/\delta_r}$ and
$\kappa_u=\sqrt{(1-\delta_u)/\delta_u}$, the chance
constraints~\eqref{eq:reach}--\eqref{eq:avoid} are implied by:
\begin{align}
    m_{t^*}^N(\cT)-\alpha_r
    &\ge\kappa_r\,\sigma_{t^*}(\cT),
    \label{eq:suff_reach}
    \\
    \beta_u-m_t^N(\cU)
    &\ge\kappa_u\,\sigma_t(\cU),
    \quad \forall t\le t^*.
    \label{eq:suff_avoid}
\end{align}
\end{lemma}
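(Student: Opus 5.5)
The plan is to reduce the statement to Proposition~\ref{prop:cantelli}, applying it once to the reach constraint at $t^*$ and once to the avoid constraint at each $t\le t^*$. The only real work is an algebraic one: when the margin is at least $\kappa$ times the standard deviation, the Cantelli lower bound is at least $1-\delta$. I will write the reach case in full and then say the avoid case follows by the symmetric substitution $X=\mu_t^N(\cU)$, $d_u=\beta_u-m_t^N(\cU)$.

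For the reach constraint, set $d_r=m_{t^*}^N(\cT)-\alpha_r$ and $\sigma=\sigma_{t^*}(\cT)\ge0$.

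Condition \eqref{eq:suff_reach} gives $d_r\ge\kappa_r\sigma\ge0$. This is exactly the hypothesis $m_{t^*}^N(\cT)\ge\alpha_r$ that Proposition~\ref{prop:cantelli} needs, so \eqref{eq:cantelli_reach} applies.

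I then split into two cases on $\sigma$.

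\emph{Case $\sigma=0$.} Proposition~\ref{prop:cantelli} already gives probability one, which is at least $1-\delta_r$.

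\emph{Case $\sigma>0$.} The map $d\mapsto \sigma^2/(\sigma^2+d^2)$ is nonincreasing on $d\ge0$. Together with $d_r^2\ge\kappa_r^2\sigma^2$, this gives
\[
\frac{\sigma^2}{\sigma^2+d_r^2}\le\frac{1}{1+\kappa_r^2}.
\]
Substituting $\kappa_r^2=(1-\delta_r)/\delta_r$ yields $1+\kappa_r^2=1/\delta_r$, so the right-hand side equals $\delta_r$. Hence $\Prob(\mu_{t^*}^N(\cT)\ge\alpha_r)\ge1-\delta_r$, which is \eqref{eq:reach}.

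For the avoid constraint, the same argument is run at each $t\le t^*$ using \eqref{eq:cantelli_avoid} and \eqref{eq:suff_avoid}. Condition \eqref{eq:suff_avoid} supplies the hypothesis $m_t^N(\cU)\le\beta_u$. The same monotonicity step then gives $\Prob(\mu_t^N(\cU)\le\beta_u)\ge 1-\delta_u$ for every such $t$, which is \eqref{eq:avoid}.

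There is no serious obstacle here. The one point to state explicitly is that \eqref{eq:suff_reach}--\eqref{eq:suff_avoid} are written in terms of the exact moments $m_t^N$ and $\Sigma_t$. These are the moments in Proposition~\ref{prop:cantelli}, not the approximations $\bar\mu_t$ and $\widehat\Sigma_t$, so no approximation error enters this lemma. That issue is handled separately by the error bounds mentioned in the introduction.
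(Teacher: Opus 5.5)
Your proof is correct and follows essentially the same route as the paper. You set $d_r=m_{t^*}^N(\cT)-\alpha_r$, use $d_r^2\ge\kappa_r^2\sigma^2$ to bound the Cantelli ratio from Proposition~\ref{prop:cantelli} by $1/(1+\kappa_r^2)=\delta_r$, treat zero variance separately, and repeat the argument for the avoid constraint at each $t\le t^*$. Your explicit monotonicity step only spells out the algebra that the paper compresses into a single ``therefore.''
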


\begin{proof}
For the reach constraint, let
$d_r:=m_{t^*}^N(\cT)-\alpha_r$. Condition~\eqref{eq:suff_reach}
ensures $d_r\ge0$ and
\begin{equation*}
d_r^2\ge \frac{1-\delta_r}{\delta_r}\,
\sigma_{t^*}^2(\cT).
\end{equation*}
Therefore,
\begin{equation*}
\frac{\sigma_{t^*}^2(\cT)}
{\sigma_{t^*}^2(\cT)+d_r^2}\le\delta_r,
\end{equation*}
and substituting this inequality into~\eqref{eq:cantelli_reach}
gives the required probability bound. For the avoid constraint, set
$d_u:=\beta_u-m_t^N(\cU)$. Condition~\eqref{eq:suff_avoid} gives
$d_u\ge0$ and
$d_u^2\ge((1-\delta_u)/\delta_u)\sigma_t^2(\cU)$, so the same
argument applied to~\eqref{eq:cantelli_avoid} yields the avoid bound.
If either variance is zero, the corresponding random variable is
constant almost surely and the conclusion follows directly.
\end{proof}

\begin{corollary}[Robust Finite-$N$ Certificate]
\label{cor:robust_finite_n}
Let $\widehat\Sigma_t$ be the covariance approximation produced by
Proposition~\ref{prop:lyapunov}. Suppose validated deterministic bounds
$b_{t,\cR},e_{t,\cR}\ge0$ are available for the mean and variance
approximation errors, respectively, such that
\begin{equation}
 \left|c_\cR^\top(m_t^N-\bar\mu_t)\right|\le b_{t,\cR},\qquad
 c_\cR^\top\Sigma_t c_\cR
 \le c_\cR^\top\widehat\Sigma_t c_\cR+e_{t,\cR}.
 \label{eq:moment_enclosures}
\end{equation}
Then the exact finite-$N$ chance constraints are implied by
\begin{align}
 c_\cT^\top\bar\mu_{t^*}-b_{t^*,\cT}-\alpha_r
 &\ge \kappa_r\sqrt{c_\cT^\top\widehat\Sigma_{t^*}c_\cT+e_{t^*,\cT}},
 \label{eq:robust_suff_reach}\\
 \beta_u-c_\cU^\top\bar\mu_t-b_{t,\cU}
 &\ge \kappa_u\sqrt{c_\cU^\top\widehat\Sigma_t c_\cU+e_{t,\cU}},
 \quad \forall t\le t^*.
 \label{eq:robust_suff_avoid}
\end{align}
\end{corollary}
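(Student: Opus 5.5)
The plan is to reduce the corollary to Lemma~\ref{lem:cantelli_sufficient}. I will show that \eqref{eq:robust_suff_reach}--\eqref{eq:robust_suff_avoid} imply the exact-moment conditions \eqref{eq:suff_reach}--\eqref{eq:suff_avoid}. By Lemma~\ref{lem:cantelli_sufficient}, and hence Proposition~\ref{prop:cantelli}, those conditions give the chance constraints \eqref{eq:reach}--\eqref{eq:avoid}. The argument is a chain of monotonicity inequalities and treats the reach and avoid constraints separately.

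\textbf{Reach at $t^*$.} The mean enclosure in \eqref{eq:moment_enclosures} gives $m_{t^*}^N(\cT)=c_\cT^\top m_{t^*}^N\ge c_\cT^\top\bar\mu_{t^*}-b_{t^*,\cT}$. Hence
\[
m_{t^*}^N(\cT)-\alpha_r\ge c_\cT^\top\bar\mu_{t^*}-b_{t^*,\cT}-\alpha_r .
\]
The variance enclosure gives $\sigma_{t^*}^2(\cT)=c_\cT^\top\Sigma_{t^*}c_\cT\le c_\cT^\top\widehat\Sigma_{t^*}c_\cT+e_{t^*,\cT}$. The square root is monotone on $[0,\infty)$ and $\kappa_r>0$. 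So the right side of \eqref{eq:robust_suff_reach} is at least $\kappa_r\sigma_{t^*}(\cT)$. Chaining these gives $m_{t^*}^N(\cT)-\alpha_r\ge\kappa_r\sigma_{t^*}(\cT)$, which is \eqref{eq:suff_reach}.

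\textbf{Avoid for each $t\le t^*$.} The argument mirrors the reach case, using the other side of the absolute-value bound: $m_t^N(\cU)\le c_\cU^\top\bar\mu_t+b_{t,\cU}$. Therefore $\beta_u-m_t^N(\cU)\ge\beta_u-c_\cU^\top\bar\mu_t-b_{t,\cU}$. The same square-root monotonicity then yields \eqref{eq:suff_avoid} for every $t\le t^*$.

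\textbf{Technical points.} The only issue needing care is that the square root in \eqref{eq:robust_suff_reach}--\eqref{eq:robust_suff_avoid} be well defined. Its argument is nonnegative because it dominates the exact variance $c_\cR^\top\Sigma_t c_\cR\ge0$ by \eqref{eq:moment_enclosures}. The robust conditions also force the left-hand margins to be nonnegative, which is consistent with the hypothesis $d\ge0$ used in Lemma~\ref{lem:cantelli_sufficient}. The zero-variance case is already handled inside that lemma. I expect no substantive obstacle: the whole content is the validity of the enclosures $b,e$, which the statement assumes.
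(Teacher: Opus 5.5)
Your proposal is correct and is the paper's proof. The mean enclosure bounds the exact reach mean from below and the exact unsafe mean from above. The variance enclosure, together with monotonicity of the square root, bounds the exact standard deviation from above, so \eqref{eq:robust_suff_reach}--\eqref{eq:robust_suff_avoid} imply the conditions of Lemma~\ref{lem:cantelli_sufficient}; your added checks (square root well defined, margins nonnegative) are correct and simply spell out what the paper leaves implicit.
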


\begin{proof}
The first enclosure lower-bounds the exact reach mean and upper-bounds
the exact unsafe mean; the second upper-bounds the relevant exact
variance. Thus \eqref{eq:robust_suff_reach}--\eqref{eq:robust_suff_avoid}
imply the conditions of Lemma~\ref{lem:cantelli_sufficient}.
\end{proof}

Replacing the   moments in~\eqref{eq:suff_reach}--\eqref{eq:suff_avoid}
with $c_{\mathcal R}^{\top}\bar\mu_t(\theta)$ and
 $\widehat{\Sigma}_t( \theta) $,
for $\mathcal R\in\{\mathcal T,\mathcal U\}$, gives  
surrogate constraints. These constraints are nonlinear
in $\theta$ through~\eqref{eq:mf_recurrence} and~\eqref{eq:lyapunov}.

\section{Chance-Constrained Mean-Field Policy Synthesis}
\label{sec:sca}

Using the mean-field and first-order covariance approximations, we
formulate the following approximate synthesis problem:

\begin{subequations}
	\label{eq:surrogate_problem}
	\begin{align}
		\max_ {\theta, t^\ast} \quad &
		J(\theta) =  \sum_{t=0}^{T-1} \sum_{s,a} r(s,a) \bar{\mu}_t(s; \theta) \pi_{\theta}(a|s, \bar{\mu}_t(\theta))  \notag \\
		& =   \sum_{t=0}^{T-1} \sum_{s,a} r(s,a)    z_t(s,a;\theta) \notag 
		\\
		\text{s.t.}\quad & \eqref{eq:robust_suff_avoid},\eqref{eq:robust_suff_reach},
 \eqref{eq:mf_recurrence}, \eqref{eq:lyapunov}. \notag
	\end{align} 
\end{subequations}
Recall that $z_t$ is the occupancy measure. 

Because the problem is nonconvex,  we solve it using the sequential convex approximation (SCA) method.
At each SCA iteration, we linearize the mean-field trajectory and solve
a trust-region SOCP that takes one policy-improvement step subject to
the Cantelli reach-avoid constraints. We first derive the required
Jacobians, then formulate the SOCP subproblem, and finally describe the
trust-region update and the search over candidate reach times.
\subsection{Linearization of the objective function}

Given the policy parameter $\theta^{(k)}$ at iteration $k$, the objective function has the following first-order Taylor expansion:
$$
\begin{aligned}
	& J(\theta^{(k)}+\Delta\theta)\\
	=& \sum_{t=0}^{T-1}\sum_{s,a} r(s,a) z_t(s,a;\theta^{(k)}+\Delta\theta)\\
	\approx & \sum_{t=0}^{T-1}\sum_{s,a} r(s,a)\left[
	z_t(s,a;\theta^{(k)}) + \nabla_\theta z_t(s,a;\theta^{(k)}) \Delta\theta
	\right]
\end{aligned}
$$
Recall that $z_t(s,a;\theta) = \bar \mu_t(s;\theta) \pi_{\theta_t}(a\mid s,\bar \mu_t(\theta))$, which is a function of $\theta$ through both the parameterized policy and $\bar \mu_t(\theta)$.
 Let $\mathbf z_t(\theta)=f_t(\theta,\bar\mu_t(\theta))$, and  
$D_t^{(k)}:=\partial\bar\mu_t/\partial\theta\big|_{\theta^{(k)}}
\in\Real^{|\cS|\times n_\theta}$ denote the Jacobian of the mean-field
density with respect to the policy parameters, with $D_0^{(k)}=0$
because $\bar\mu_0=\mu_0$ is fixed.
Here $n_\theta=T|\mathcal S||\mathcal A|n_{\rm feat}$ is the total
number of policy parameters.

\begin{proposition}
\label{prop:linearization}
The Jacobian of $z_t$ at iterate $\theta^{(k)}$ is
\begin{equation} \left.
	\frac{\partial \mathbf z_t(\theta)}{\partial\theta}
	\right|_{\theta=\theta^{(k)}}
	 :=  G_t^{(k)} = J_t^\theta + J_t^\mu D_t^{(k)},
    \label{eq:G_full}
\end{equation}
where $J_t^\theta:=\partial_\theta f_t\big|_{(\theta^{(k)},\bar\mu_t^{(k)})}$
and $J_t^\mu:=\partial_\mu f_t\big|_{(\theta^{(k)},\bar\mu_t^{(k)})}$, and
$D_t^{(k)}$ propagates forward via
\begin{equation}
    D_{t+1}^{(k)} = F\,G_t^{(k)}, \qquad D_0^{(k)}=0.
    \label{eq:D_recursion}
\end{equation}
Consequently, the linearized occupation measure and mean trajectory are
$\mathbf{z}_t^{\rm lin}(\dtheta)=\mathbf{z}_t^{(k)}+G_t^{(k)}\dtheta$
and $\bar\mu_t^{\rm lin}(\dtheta)=\bar\mu_t^{(k)}+D_t^{(k)}\dtheta$.
\end{proposition}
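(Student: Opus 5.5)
The argument is the multivariate chain rule applied along the mean-field recursion~\eqref{eq:mf_recurrence}, followed by induction on $t$. Since $\bar\mu_0=\mu_0$ does not depend on $\theta$, we have $D_0^{(k)}=0$. Each $\bar\mu_t(\theta)$ is then a finite composition of the maps $f_t$ and the constant matrix $F$. The softmax-linear policy~\eqref{eq:policy} with $C^2$ features $\phi$ makes $f_t(\theta,\mu)=\pi_{\theta_t}(a\mid s,\mu)\mu(s)$ continuously differentiable jointly in $(\theta,\mu)$. So by induction every $\bar\mu_t(\theta)$ and every $\mathbf z_t(\theta)$ is $C^1$ in $\theta$, and all the Jacobians in the statement exist.

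For the Jacobian formula, fix $t$ and assume inductively that $D_t^{(k)}=\partial\bar\mu_t/\partial\theta|_{\theta^{(k)}}$ is known. Write $\mathbf z_t(\theta)=f_t(\theta,\bar\mu_t(\theta))$ as the composition of $\theta\mapsto(\theta,\bar\mu_t(\theta))$ with $f_t$. The chain rule gives
\[
\partial_\theta\mathbf z_t\big|_{\theta^{(k)}}
=\partial_\theta f_t\big|_{(\theta^{(k)},\bar\mu_t^{(k)})}
+\partial_\mu f_t\big|_{(\theta^{(k)},\bar\mu_t^{(k)})}\,D_t^{(k)}
=J_t^\theta+J_t^\mu D_t^{(k)}=G_t^{(k)},
\]
which is~\eqref{eq:G_full}. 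Differentiating the linear identity $\bar\mu_{t+1}(\theta)=F\mathbf z_t(\theta)$ then gives $D_{t+1}^{(k)}=F G_t^{(k)}$, which is~\eqref{eq:D_recursion}, and this closes the induction.

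Two bookkeeping points need checking, and they are the only delicate steps. First, $J_t^\mu$ here must coincide with the coordinatewise formula~\eqref{eq:Jmu_coord_early}. This follows from the product rule: $\partial_{\mu(s'')}[\pi_{\theta_t}(a\mid s,\mu)\mu(s)]=\pi_{\theta_t}(a\mid s,\mu)\delta_{ss''}+\mu(s)\partial_{\mu(s'')}\pi_{\theta_t}(a\mid s,\mu)$. Second, $J_t^\theta$ is nonzero only in the columns of the slice $\theta_t$. Its entries are the standard softmax derivative $\pi(a\mid s,\mu)\mu(s)\bigl(\mathbf 1[a=b]-\pi(b\mid s,\mu)\bigr)\phi(s,\mu)$ for the parameter block $\theta_t[s,b,:]$. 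This does not affect the argument, but it confirms that $G_t^{(k)}$ lives in $\Real^{|\cS||\cA|\times n_\theta}$, so that the product $J_t^\mu D_t^{(k)}$ is dimensionally consistent.

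The linearized quantities follow from first-order Taylor expansion of the $C^1$ maps $\theta\mapsto\mathbf z_t(\theta)$ and $\theta\mapsto\bar\mu_t(\theta)$ at $\theta^{(k)}$: $\mathbf z_t(\theta^{(k)}+\dtheta)=\mathbf z_t^{(k)}+G_t^{(k)}\dtheta+o(\|\dtheta\|)$, and likewise for $\bar\mu_t$ with $D_t^{(k)}$. Dropping the remainder gives the stated $\mathbf z_t^{\rm lin}$ and $\bar\mu_t^{\rm lin}$. The argument has no substantive obstacle. The only care needed is to distinguish the total derivative of $\mathbf z_t$, which includes the indirect path through $\bar\mu_t$, from the partial derivative $J_t^\theta$ of $f_t$.
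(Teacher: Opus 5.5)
Your proof is correct and follows essentially the same route as the paper's: apply the chain rule to $\mathbf z_t(\theta)=f_t(\theta,\bar\mu_t(\theta))$, differentiate the linear identity $\bar\mu_{t+1}=F\mathbf z_t$ to get $D_{t+1}^{(k)}=FG_t^{(k)}$, and take a first-order Taylor expansion for $\mathbf z_t^{\rm lin}$ and $\bar\mu_t^{\rm lin}$. Your explicit induction, the $C^1$ regularity check, and the product-rule and softmax computations for $J_t^\mu$ and $J_t^\theta$ spell out details the paper treats briefly; the paper also records the combined recursion $D_{t+1}^{(k)}=FJ_t^\mu D_t^{(k)}+FJ_t^\theta$, which you omit, but the statement does not require it.
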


\begin{proof}
By \eqref{eq:mf_recurrence},
$\bar\mu_{t+1}(\theta)=F\mathbf z_t(\theta)
=Ff_t(\theta,\bar\mu_t(\theta))$.

Since $f_t$ depends on $\theta$ both directly and through $\bar\mu_t(\theta)$, the chain rule with
$\partial_\theta f_t=J_t^\theta$,
$\partial_\mu f_t=J_t^\mu$, $d \bar{\mu}_t /d\theta=D_t^{(k)}$ gives exactly
\eqref{eq:G_full}:
\begin{equation*}
    G_t^{(k)} = \frac{dz_t}{d\theta}\bigg|_{\theta^{(k)}}
    = J_t^\theta + J_t^\mu D_t^{(k)}.
\end{equation*}
Here, $J_t^\theta(s,a) = \bar\mu_t^{(k)}(s)\nabla_\theta\pi_{\theta_t}(a\mid s,\bar\mu_t^{(k)})$
and \begin{multline*}
	J_t^\mu(s,a)[s''] = \pi^{(k)}(a\mid s,\bar\mu_t^{(k)} )\,\delta_{ss''}
	+ \\
	\bar\mu_t^{(k)}(s)\,
	\partial_{\mu(s'')}\pi_\theta(a\mid s,\mu)\big|_{\bar\mu_t^{(k)}}.
	\label{eq:Jmu_coord}
\end{multline*}

The flow matrix
$F$ is constant, $\theta$-independent, so differentiating
$\bar\mu_{t+1}(\theta)=Fz_t(\theta)$ directly gives
$D_{t+1}^{(k)}=F\,dz_t/d\theta=FG_t^{(k)}$, i.e.\ \eqref{eq:D_recursion}.
Substituting the expression for $G_t^{(k)}$ above into this combines the pair
\eqref{eq:G_full}--\eqref{eq:D_recursion} into a single linear
recursion   in $D_t^{(k)}$:
\begin{equation}
    D_{t+1}^{(k)} = FJ_t^\mu D_t^{(k)}+FJ_t^\theta,
    \qquad D_0^{(k)}=0.
    \label{eq:D_decoupled}
\end{equation}
The first term propagates the density sensitivity, while the second
captures the direct effect of the policy parameters.
$G_t^{(k)}$ is then recovered
from $D_t^{(k)}$ via \eqref{eq:G_full}.

For $\theta=\theta^{(k)}+\dtheta$,
\begin{align*}
    z_t(\theta^{(k)}+\dtheta)
    &\approx z_t(\theta^{(k)}) + G_t^{(k)}\dtheta
    = \mathbf z_t^{(k)} + G_t^{(k)}\dtheta,
    \\
    \bar\mu_t(\theta^{(k)}+\dtheta)
    &\approx \bar\mu_t(\theta^{(k)}) + D_t^{(k)}\dtheta
    = \bar\mu_t^{(k)} + D_t^{(k)}\dtheta,
\end{align*}
\end{proof}

\subsection{SOCP Subproblem}
For a finite-$N$ certificate, let $b_{t,\cR}^{(k)}$ and
$e_{t,\cR}^{(k)}$ be validated upper bounds satisfying
\eqref{eq:moment_enclosures} throughout the current trust region.
Linearising the mean and freezing these bounds gives the $k$-th SCA
subproblem:
\begin{subequations}
\label{eq:sca_subproblem}
\begin{align}
    \max_{\dtheta,\,\xi\ge0}\;&
    \sum_{t=0}^{T-1}\mathbf{r}^\top
    (\mathbf{z}_t^{(k)}{+}G_t^{(k)}\dtheta)
    - \lambda\textstyle\sum_t\xi_t
    \label{eq:sca_obj}
    \\
    \text{s.t.}\;&
    c_\cT^\top\bar\mu_{t^*}^{\rm lin}(\dtheta)
    \ge\alpha_r+b_{t^*,\cT}^{(k)}
    +\kappa_r\widehat\sigma_{t^*}^{\cT,(k)},
    \label{eq:sca_reach}
    \\
    &c_\cU^\top\bar\mu_t^{\rm lin}(\dtheta)-\xi_t
    \le\beta_u-b_{t,\cU}^{(k)}
    -\kappa_u\widehat\sigma_t^{\cU,(k)},
    \;\forall t\le t^*,
    \label{eq:sca_avoid}
    \\
    &\|\dtheta\|_2\le\rho^{(k)},
    \label{eq:sca_tr}
\end{align}
\end{subequations}
where $\xi_t$ are slack variables, $\mathbf{r}$ is the reward function written in the vector form, $t^*$ is fixed (see below), $\rho^{(k)}>0$ is the trust-region
radius,
\begin{equation}
    \bar\mu_t^{\rm lin}(\dtheta) := \bar\mu_t^{(k)} + D_t^{(k)}\dtheta
    \label{eq:mu_lin_recall}
\end{equation}
is the linearized mean trajectory of Proposition~\ref{prop:linearization},
and
\begin{equation}
    \widehat\sigma_t^{\cR,(k)} :=
    \sqrt{\max\bigl(c_\cR^\top\widehat\Sigma_t^{(k)}c_\cR
    +e_{t,\cR}^{(k)},\,0\bigr)},
    \qquad \cR\in\{\cT,\cU\},
    \label{eq:sigma_frozen}
\end{equation}
is a certified upper bound on the Cantelli standard deviation of
region $\cR$. Here $\widehat\Sigma_t^{(k)}$ is the first-order
covariance recursion evaluated at the current iterate. The resulting
constraints are the robust conditions
\eqref{eq:robust_suff_reach}--\eqref{eq:robust_suff_avoid} with a
linearized mean.  
Problem~\eqref{eq:sca_subproblem} is an SOCP (the objective and
constraints are affine in $\dtheta$; the trust region is a
second-order cone constraint), solvable in polynomial time.

\paragraph{Fixed-covariance surrogate}
The mean trajectory is linearized as $\bar\mu_t^{\rm lin}(\dtheta)$,
whereas the covariance surrogate is held at $\widehat\Sigma_t^{(k)}$.
This is because $\widehat\Sigma_t(\theta)$ depends on $\theta$ through
the nonlinear Lyapunov recursion~\eqref{eq:lyapunov}, which contains
the term $H_t(\theta)\widehat\Sigma_t(\theta)H_t(\theta)^\top$.
Freezing $\widehat\sigma_t^{\cR,(k)}$ keeps the constraints affine in
$\dtheta$ and preserves the SOCP form of the subproblem.
The trust region keeps this approximation local, and the covariance is
recomputed after every accepted step. By Proposition~\ref{prop:lyapunov}, the change in covariance over a gradient step is
typically $O(\|\dtheta\|/N)$.  Such a small error can be negligible or absorbed into the
error    $e_{t,\cR}$ of
Corollary~\ref{cor:robust_finite_n}.

\paragraph{Trust-Region Update}
For each SCA run, $t^*$ is fixed to one candidate reach time. 
Since a step that improves $J$ can violate the robust constraints after
linearization, we accept or reject it using an $L_1$-penalty merit
function evaluated at the nonlinear robust margins,
\begin{equation}
	\begin{aligned}
	M(\theta) := J(\theta) - \lambda_M\Bigl[&
	\max\bigl(0,-g_{\rm reach}(\theta)\bigr)\\[-1mm]
	&+ \max\bigl(0,-g_{\rm avoid}(\theta)\bigr)\Bigr],
	\end{aligned}
	\label{eq:merit}
\end{equation}
where we define the robust standard-deviation margin
$q_{t,\cR}(\theta):=\sqrt{c_\cR^\top\widehat\Sigma_t(\theta)c_\cR
+e_{t,\cR}(\theta)}$. Then
\begin{align*}
g_{\rm reach}(\theta)
&=c_\cT^\top\bar\mu_{t^*}(\theta)-\alpha_r-b_{t^*,\cT}(\theta)
-\kappa_r q_{t^*,\cT}(\theta),\\
g_{\rm avoid}(\theta)
&=\min_{t\le t^*}\Bigl[\beta_u-c_\cU^\top\bar\mu_t(\theta)
-b_{t,\cU}(\theta)\\[-1mm]
&\qquad-\kappa_u q_{t,\cU}(\theta)\Bigr].
\end{align*}
The bounds $b_{t,\cR}(\theta)$ and $e_{t,\cR}(\theta)$ are recomputed
or validated at the candidate policy; $\lambda_M>0$ penalizes
constraint violation.
If the SOCP is feasible, set
$\theta'=\theta^{(k)}+\dtheta^{(k)}$ and evaluate the candidate using
the nonlinear merit function. The trust-region radius is then updated as:
\begin{equation}
	\rho^{(k+1)}=
	\begin{cases}
		\min(\beta_{\rm inc}\rho^{(k)},\rho_{\rm max})
		& \text{if }M(\theta')\ge M(\theta^{(k)})-\varepsilon,
		\\
		\beta_{\rm dec}\rho^{(k)} & \text{otherwise.}
	\end{cases}
	\label{eq:trust_update}
\end{equation}
where $\beta_{\rm dec}\in(0,1)$, $\beta_{\rm inc}>1$, and
$\varepsilon\ge0$ is the acceptance tolerance.
Candidates that fail the merit-function acceptance test are rejected
and the trust-region radius is reduced.

If instead the SOCP~\eqref{eq:sca_subproblem} is infeasible, the trust region
may be too small to reach the required constraint shifts. We therefore
increase $\rho^{(k)}$ up to $\rho_{\rm max}$. If infeasibility persists
at $\rho_{\rm max}$, we terminate the current SCA run and proceed to
the next candidate $t^*$. 

\paragraph{Choose candidate reach time}
The existential reach constraint requires only that
\eqref{eq:sca_reach} holds at \emph{some} $t^*$, not a specific terminal time $T$.
Given an initial warm-start policy, we enumerate the
top-$K_{t^*}$ candidate time steps by their initial reach margins
$c_\cT^\top\bar\mu_t^{(0)}-\alpha_r-b_{t,\cT}^{(0)}
-\kappa_r\widehat\sigma_t^{\cT,(0)}$
(descending). We run SCA for each candidate in turn and stop when both
margins in the merit function~\eqref{eq:merit} satisfy
$r_m:=g_{\rm reach}\ge0$ and $a_m:=g_{\rm avoid}\ge0$. If no candidate
satisfies both margins, we return the infeasible result with the largest
$M(\theta)$ across all $K_{t^*}$ candidates. Because the search stops
at the first candidate satisfying both margins, rather than comparing
all $K_{t^*}$ candidates, the returned policy may be suboptimal in reward.

\RestyleAlgo{ruled}
\begin{algorithm}[t]
\footnotesize
\caption{SCA for Reach-Avoid Swarm Control}
\label{alg:sca}
\KwIn{$\theta^{(0)}$, $N$, $K$, $K_{t^*}$,
      $\rho_0$, $\beta_{\rm inc}$, $\beta_{\rm dec}$,
      $\rho_{\rm max}$, $\lambda$, $\lambda_M$, $\varepsilon$}
\KwOut{$\theta^*$, $\bar\mu^*$, $\widehat\Sigma^*$, $r_m^*$, $a_m^*$}
Compute $(\bar\mu_t^{(0)},\widehat\Sigma_t^{(0)})$ via
forward sim.\ and Lyapunov recursion at $\theta^{(0)}$\;
Compute or validate $b_{t,\cR}^{(0)},e_{t,\cR}^{(0)}$ for the initial
trust-region radius $\rho_0$\;
Rank $t\in[T]$ by $c_\cT^\top\bar\mu_t^{(0)}-\alpha_r
-b_{t,\cT}^{(0)}-\kappa_r\widehat\sigma_t^{\cT,(0)}$ descending;
set $\mathcal{T}^*\leftarrow$ top-$K_{t^*}$ candidates\;
\ForEach{$t^*\in\mathcal{T}^*$}{
    $\theta\leftarrow\theta^{(0)}$\;
    $\rho\leftarrow\rho_0$\;
    \For{$k=0,\ldots,K-1$}{
        Compute $(\bar\mu_t^{(k)},\widehat\Sigma_t^{(k)})$\;
        Compute or validate $b_{t,\cR}^{(k)},e_{t,\cR}^{(k)}$ for the
        current trust region\;
        Compute $G_t^{(k)}, D_t^{(k)}$ via
        Prop.~\ref{prop:linearization}\;
        Solve SOCP~\eqref{eq:sca_subproblem}
        $\to\dtheta^{(k)}$\;
        \If{infeasible}{
            \If{$\rho=\rho_{\rm max}$}{
                \textbf{break}\tcp*{proceed to the next $t^*$}
            }
            $\rho\leftarrow\min(\beta_{\rm inc}\rho,\rho_{\rm max})$;
            \textbf{continue}\;
        }
        $\theta'\leftarrow\theta+\dtheta^{(k)}$\;
        \eIf{$M(\theta')\ge M(\theta)-\varepsilon$}{
            $\theta\leftarrow\theta'$\;
            $\rho\leftarrow\min(\beta_{\rm inc}\rho,\rho_{\rm max})$\;
        }{
            $\rho\leftarrow\beta_{\rm dec}\rho$\;
        }
    }
    Evaluate $r_m,a_m,M(\theta)$\;
    \If{$r_m\ge0$ and $a_m\ge0$}{
        Store the candidate\;
        \textbf{break}\tcp*{first candidate satisfying both margins}
    }
    Store if $M(\theta)$ is the largest infeasible value so far\;
}
\Return{stored $(\theta^*,\bar\mu^*,\widehat\Sigma^*,r_m^*,a_m^*)$}
\end{algorithm}

\section{Experimental Validation}
\label{sec:experiments}
We validate the algorithm using two case studies.\footnote{Code is
available at \url{https://github.com/jiefu2017/mfmdp-swarm-control}.}
As a baseline, we use a mean-field LP that solves the mean-field relaxation of
Problem~\ref{prob:main} directly over the occupation measure
$\mathbf z_t(s,a)\ge0$, rather than over policy parameters $\theta$:
\begin{subequations}
	\label{eq:mflp}
	\begin{align}
		\max_{\mathbf z_0,\ldots,\mathbf z_{T-1}\ge0}\;&
		\sum_{t=0}^{T-1}\mathbf r^\top\mathbf z_t
		\\
		\text{s.t.}\;&
		\bar\mu_{t+1} = F\mathbf z_t,\quad \bar\mu_0=\mu_0,
		\\
		& c_\cT^\top\bar\mu_{t^*}\ge\alpha_r,
		\quad c_\cU^\top\bar\mu_t\le\beta_u\;\;\forall t\le t^*,
	\end{align}
\end{subequations}
enforcing only the \emph{mean} constraints.
Since $t^*$ is existential, \eqref{eq:mflp} is solved once for each
candidate $t^*\in[T]$ and the best feasible reward is kept.
Because the objective and constraints are linear in $\mathbf z_t$,
\eqref{eq:mflp} is an LP and is solved exactly.
The policy is recovered as
$\pi_t(a\mid s)=z_t^*(s,a)/\bar\mu_t^*(s)$ (uniform where
 $\bar\mu_t^*(s)=0$). %

\vspace{-0.5em}
\subsection{Gridworld Environment}

Our first case study is a $6\times6$ discrete gridworld
(Fig.~\ref{fig:gridworld}, $|\cS|=36$,
$|\cA|=5$: up/down/left/right/stay, $T=20$).
Transitions: intended action with probability $0.9$; otherwise a
uniformly random \emph{other} action executes instead (probability
$0.1/4$ each). Moves into a wall cell, or off the grid, leave the agent
in place.
Start: $(5,0)$.
A wall at $x=2$ separates the swarm from three reach targets
$\cT=\{(0,0),(0,3),(0,4)\}$ (reward $0.5$ each) and one reward-only
target $(0,5)$ (reward $1.0$, not in $\cT$): higher reward, but not
reach-designated, mirroring the reward/safety tension of
the reach-avoid setting considered here.
The wall has three physically impassable cells $(2,0),(2,2),(2,4)$ and
three openings $(2,1),(2,3),(2,5)$; every route from the start to any
target must pass through one of these openings.

\begin{figure}[t]
\centering
\includegraphics[width=0.75\columnwidth]{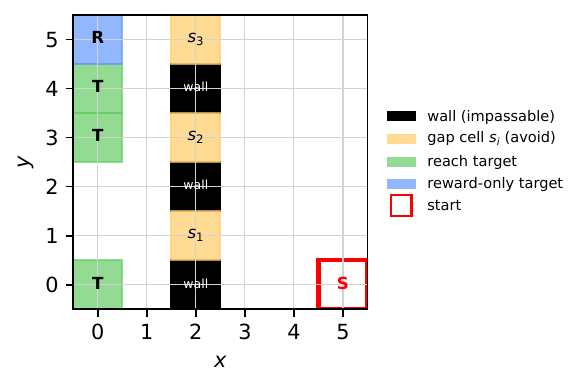}
\vspace{-3ex}
\caption{The $6\times6$ gridworld environment. Start (red, $S$) at
$(5,0)$; wall at $x=2$ with three impassable cells (black) and three
open gap cells $s_1,s_2,s_3$ (orange) that every route must cross;
three reach targets (green, $T$) and one reward-only target (blue,
$R$), both past the wall.}
\label{fig:gridworld}
\vspace{-3ex}
\end{figure}

Rather than one   avoid constraint, we consider
  three independent per-cell instances of the avoid clause
in~\eqref{eq:mflp} (and its Cantelli counterpart~\eqref{eq:sca_avoid}),
 tracking each cell's \emph{instantaneous} occupancy
(congestion): for opening $i$,
$\Prob(\mu_t^N(s_i)>0.30)\le0.40$, i.e.\ $\beta_{u,i}=0.30$,
$\delta_{u,i}=0.40$
($\kappa_{u,i}=\sqrt{(1-\delta_{u,i})/\delta_{u,i}}=\sqrt{1.5}\approx1.22$),
which means the probability that more than $30\%$ of agents occupy
the same opening simultaneously is at most $40\%$. 
  Reach parameters: $\alpha_r=0.3$, $\delta_r=0.1$ ($\kappa_r=3$), which means with probability at least $90\%$, at least $30\%$ of the agents reach the target.

Table~\ref{tab:results} reports the surrogate Cantelli margins $r_m,a_m$
at the $t^*$ selected by each method, together with Monte Carlo estimates
over 500 trials (fleet sizes $N\in\{10,20,50,100\}$), using
Algorithm~\ref{alg:sca} with the LP-based warm start.
$\hat P_{\rm v}$ reports the \emph{maximum} over the three per-cell
violation rates. The target values are $\hat P_{\rm reach}\ge 0.90$ and $\hat P_{\rm v}\le0.40$.

\begin{table}[t]
\centering
\caption{Gridworld results. $r_m$, $a_m$: surrogate Cantelli margins
($\ge0$ indicates that the surrogate condition is satisfied).
$\hat J$: empirical mean total reward. $\hat P_{\rm r}$: empirical reach
probability. $\hat P_{\rm v}$: maximum empirical per-cell avoid
violation rate. Emp.\ met? indicates whether both empirical reach
and avoid requirements are met.}
\label{tab:results}
\renewcommand{\arraystretch}{1.2}
\setlength{\tabcolsep}{3.5pt}
\resizebox{\columnwidth}{!}{%
\begin{tabular}{ccccc|cccc}
\toprule
& \multicolumn{4}{c|}{\textbf{MF-LP (baseline)}}
& \multicolumn{4}{c}{\textbf{SCA (proposed)}} \\
$N$ & $\hat J$ & $\hat P_{\rm r}$ & $\hat P_{\rm v}$
    & emp.\ met?
    & $\hat J$ & $\hat P_{\rm r}$ & $\hat P_{\rm v}$
    & $r_m$/$a_m$ \\
\midrule
 10 & 9.39 & 0.91$^\checkmark$ & 0.38$^\checkmark$ & $\checkmark$
    & 9.38 & 1.00$^\checkmark$ & 0.40$^\checkmark$
    & $-.108^\times/-.194^\times$ \\
 20 & 9.38 & 0.95$^\checkmark$ & 0.39$^\checkmark$ & $\checkmark$
    & 8.97 & 1.00$^\checkmark$ & 0.12$^\checkmark$
    & $+.029/+.000$ \\
 50 & 9.37 & 1.00$^\checkmark$ & 0.44$^\times$ & $\times$
    & 9.23 & 1.00$^\checkmark$ & 0.08$^\checkmark$
    & $+.000/+.000$ \\
100 & 9.38 & 1.00$^\checkmark$ & 0.47$^\times$ & $\times$
    & 9.28 & 1.00$^\checkmark$ & 0.12$^\checkmark$
    & $+.090/+.000$ \\
\bottomrule
\end{tabular}}
\end{table}

\textbf{Key findings.}
(1)~SCA achieves positive Cantelli margins and stays  
under budget ($\hat P_{\rm v}\le0.12$) for $N\ge20$.
At $N=10$, the \emph{empirical} reach rate is $100\%$ and the
\emph{empirical} per-cell violation rate is also $40\%$. However,
$r_m=-0.108$ and $a_m=-0.194$ mean that the surrogate Cantelli
conditions are not satisfied. This reflects the conservatism of the
distribution-free Cantelli bound: the empirical trials meet both
requirements even though the surrogate test does not.

In comparison, the MF-LP violation rate does not decrease with $N$,
rising from $0.38$ at $N=10$ to $0.47$ at $N=100$. This is because the MF-LP's
mean trajectory sits exactly at the boundary,
$\bar\mu_t(s_i)=\beta_{u,i}=0.30$.
By the central limit theorem, the occupancy $\mu_t^N(s_i)$ is
approximately Gaussian for large $N$, and because a Gaussian is
symmetric about its mean, $\Prob(\mu_t^N(s_i) > \beta_{u,i})
\approx 50\%$.
Lastly, for $N\ge20$, the SCA reward is within $5\%$ of the MF-LP
reward: the three reach
targets give SCA   freedom to redistribute reward-collection time
without sacrificing much of the safety margin.
\vspace{-0.5em}
\subsection{EV-Charging Aggregator}
\label{sec:ev-experiment}

Our second case study is an EV-charging application in which the avoid constraint is a transformer
capacity limit on the \emph{aggregate action}, i.e.\ on the occupation
measure $\mathbf z_t^N$ itself.

\textbf{Environment.} States $\cS=\{0,\ldots,10\}$ are state-of-charge
(SOC) bins in $10\%$ increments; the target set is
$\cT=\{8,9,10\}$ (SOC $\ge80\%$). Three actions: \emph{fast} charging
(reward $3$; advances SOC by $2$ bins w.p.\ $0.6$, else $1$ bin),
\emph{slow} charging (reward $1$; advances by $1$ bin w.p.\ $0.9$,
else unchanged), and \emph{idle}/no-charging (reward $0$; SOC
unchanged deterministically).
SOC $10$ is absorbing. All agents start at SOC $0$.
The policy feature vector is
$\phi(s,\mu)=[1,\,s/10,\,\mu(\cT)]$: a bias, normalized SOC, and a
global density-feedback feature reporting how much of the fleet has
already reached the target.
Rather than bounding how many agents occupy an unsafe \emph{state},
the operational constraint here is that, with probability at least $1-\delta_u$, no more than a fraction
$\beta_u$ of the fleet may fast-charge \emph{simultaneously}, at any
time. Define the empirical occupation measure as
$z_t^N(s,a):=N^{-1}\sum_{i=1}^N\mathbf 1[X_t^i=s,A_t^i=a]$, the
fraction of agents in state $s$ taking action $a$ at time $t$. With
$c_F(s,a)=\mathbf 1[a={\rm fast}]$,
\begin{equation}
    \Prob^\pi\!\bigl(c_F^\top\mathbf z_t^N\le\beta_u\bigr)
    \ge 1-\delta_u,\quad\forall t.
    \label{eq:ev_avoid}
\end{equation} %
For the aggregate-action constraint~\eqref{eq:ev_avoid}, we use the
occupation-measure variance approximation in
Appendix~\ref{app:occupation_variance}, with $c=c_F$. The surrogate
Cantelli condition and SCA subproblem then use $G_t^{(k)}$, the Jacobian
of $\mathbf z_t$, for this constrained quantity. The following
parameters are used:
$T=12$, $\alpha_r=0.8$, $\delta_r=0.1$
($\kappa_r=3$), $\beta_u=0.5$, $\delta_u=0.3$ ($\kappa_u=1.53$).
Unlike in the gridworld case study, the reach time is fixed at
$t^*=T=12$ because the requirement is that at least $80\%$ of the fleet
attain an SOC of at least $80\%$ by the terminal time. The
avoid/congestion constraint
is then active over the entire charging window.
\begin{table}[t]
\centering
\caption{EV-charging results. $r_m$, $a_m$: surrogate Cantelli margins
($\ge0$ indicates that the surrogate condition is satisfied), with
$a_m$ computed using~\eqref{eq:cov_z}.
$\hat J$: empirical mean total reward. $\hat P_{\rm r}$: empirical
reach probability. $\hat P_{\rm v}$:
maximum empirical per-timestep marginal avoid violation rate.
Emp.\ met? indicates whether both empirical requirements
($\hat P_{\rm r}\ge0.90$ and $\hat P_{\rm v}\le0.30$) are met.}
\label{tab:ev-results}
\renewcommand{\arraystretch}{1.2}
\setlength{\tabcolsep}{3.5pt}
\resizebox{\columnwidth}{!}{%
\begin{tabular}{ccccc|cccc}
\toprule
& \multicolumn{4}{c|}{\textbf{MF-LP (baseline)}}
& \multicolumn{4}{c}{\textbf{SCA (proposed)}} \\
$N$ & $\hat J$ & $\hat P_{\rm r}$ & $\hat P_{\rm v}$
    & emp.\ met?
    & $\hat J$ & $\hat P_{\rm r}$ & $\hat P_{\rm v}$
    & $r_m$/$a_m$ \\
\midrule
 20 & 19.11 & 1.00$^\checkmark$ & 0.45$^\times$ & $\times$
    & 17.62 & 1.00$^\checkmark$ & 0.27$^\checkmark$
    & $+.138^\checkmark/{-}.134^\times$ \\
 50 & 19.09 & 1.00$^\checkmark$ & 0.49$^\times$ & $\times$
    & 17.56 & 1.00$^\checkmark$ & 0.06$^\checkmark$
    & $+.196^\checkmark/{+}.000^\checkmark$ \\
100 & 19.09 & 1.00$^\checkmark$ & 0.52$^\times$ & $\times$
    & 17.99 & 1.00$^\checkmark$ & 0.08$^\checkmark$
    & $+.199^\checkmark/{+}.000^\checkmark$ \\
200 & 19.09 & 1.00$^\checkmark$ & 0.51$^\times$ & $\times$
    & 18.31 & 1.00$^\checkmark$ & 0.07$^\checkmark$
    & $+.200^\checkmark/{+}.000^\checkmark$ \\
\bottomrule
\end{tabular}}
	\vspace{-3ex}
\end{table}

\textbf{Key findings.} 
(1)~The MF-LP maximum per-time-step violation rate remains near the
CLT-predicted $50\%$ level ($0.45$--$0.52$) across all tested fleet sizes,
which is a property of the LP's mean-only formulation.
(2)~SCA uses the occupation-measure variance approximation
in~\eqref{eq:cov_z}. It satisfies the surrogate conditions for
$N\ge50$
($a_m\approx 0$, $\hat P_{\rm v}\le0.08$ at $N=50,100,200$). As $N$
grows, the aggregate charging fraction concentrates around its mean,
reducing the uncertainty margin required by the surrogate constraint.
For $N\ge50$, the observed violation rate remains at or below $0.08$.
Figure~\ref{fig:ev-idle} illustrates the corresponding mean charging
profiles: the MF-LP fast-charging fraction sits at the $\beta_u=0.5$
limit, whereas the SCA fraction remains around $0.42$. We also observe
that SCA's slow-charging fraction is
higher than MF-LP's, especially at earlier timesteps: because SCA's
average fast-charging fraction is lower, it must rely more on
slow-charging throughout to still satisfy the reach constraint.

Lastly, as an ablation study, we   tested replacing the scalar feature $\mu(\cT)$ with the
full density vector $\mu_t\in\Real^{|\cS|}$ (raising $n_{\rm feat}$
from $3$ to $2+|\cS|=13$, and $\theta$ from $1{,}188$ to $5{,}148$
parameters). The two feature designs produced similar results, so we
omit the full-density results. %

\textbf{Computational performance.} On an Apple M1 Pro with 16~GB RAM,
the gridworld policy has 18{,}000 parameters and uses at most 25 SCA
iterations for each of up to three candidate reach times. Including SCA
and the 500-trial evaluations of both methods, the runtime per fleet
size was 53--174~s. The EV policy has 1{,}188 parameters and uses one
fixed reach time; the corresponding runtime was 12--15~s. The
experiments used Python~3.14, JAX~0.11, and CVXPY~1.9.

\begin{figure}[t]
	\centering
	\includegraphics[width=0.85\columnwidth]{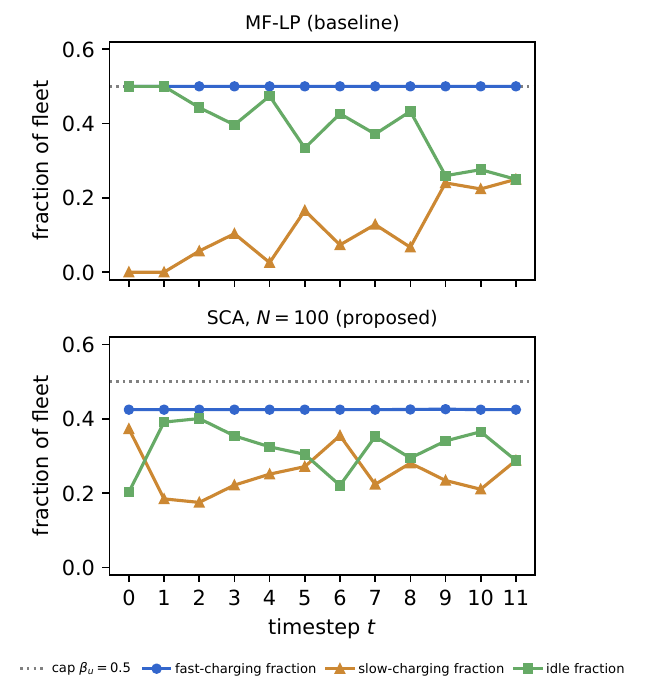}
		\vspace{-3ex}
	\caption{Aggregate fast-charging, slow-charging, and idle fractions
		over time (summing to 1 at every $t$), MF-LP's optimal solution (top)
		versus the SCA solution at $N=100$ (bottom). MF-LP saturates
		the $\beta_u=0.5$ cap at every timestep and avoids slow-charging
		until the departure deadline forces it ($t=9$--$11$); SCA holds a
		margin below the cap throughout.}
	\label{fig:ev-idle}
	\vspace{-3ex}
\end{figure}

\section{Conclusion}
\label{sec:conclusion}

We presented a policy synthesis framework for chance-constrained
reach-avoid control of finite-$N$ swarms on discrete MDPs. The proposed method optimizes 
a density-feedback policy via Sequential Convex Approximation over a
sequence of SOCPs for constrained policy gradient updates. In both case studies, the MF-LP baseline's
mean-only solution lies on the active constraint boundary, causing its
violation probability to approach approximately $50\%$ rather than
vanish as $N$ increases. The standard-deviation penalties in the SCA
constraints decrease at rate $O(1/\sqrt{N})$, reducing conservatism for
larger populations.  Future work includes extending
this approach to true mean-field-type MDPs, where the transition
kernel and reward function are coupled to the population's empirical density, and
to chance-constrained finite-N MDPs with common noise. 

\appendix[Variance of Aggregate Occupation Measures]
\label{app:occupation_variance}

For a linear statistic $Y_t=c^\top\mathbf z_t^N$ of the empirical
occupation measure, where $c\in\Real^{|\cS||\cA|}$, define
\begin{align*}
 h_c(\mu)&:=\sum_{s,a}c(s,a)\pi_{\theta_t}(a\mid s,\mu)\mu(s),\\
 v_t(s,c)&:={\rm Var}_{a\sim\pi_{\theta_t}(\cdot\mid s,\bar\mu_t)}[c(s,a)].
\end{align*}
Conditioned on $\mu_t^N$, agents at each state choose actions
independently, so their action counts have multinomial distributions.
Applying the law of total variance and linearizing the conditional mean
$h_c$ around $\bar\mu_t$ gives
\begin{equation}
 {\rm Var}[Y_t]\approx
 \frac1N\sum_s\bar\mu_t(s)v_t(s,c)
 +\nabla h_c(\bar\mu_t)^\top\widehat\Sigma_t\nabla h_c(\bar\mu_t),
 \label{eq:cov_z}
\end{equation}
where $\widehat\Sigma_t$ is the state-density covariance approximation
from Proposition~\ref{prop:lyapunov}. The first term is the conditional
action-sampling variance; the second propagates state-density
fluctuations through the density-dependent conditional mean. Define
$\widehat\sigma_t(c)$ as the square root of the right-hand side of
\eqref{eq:cov_z}. For the EV-charging example, $c=c_F$ selects the
fraction of agents taking the fast-charging action.

\bibliographystyle{IEEEtran}

\end{document}